\documentclass[submission,copyright,creativecommons]{eptcs}
\providecommand{\event}{LTT 2026} 

\usepackage{iftex}

\ifpdf
  \usepackage{underscore}         
  \usepackage[T1]{fontenc}        
\else
  \usepackage{breakurl}           
\fi

\usepackage[T1]{fontenc}

\usepackage[english]{babel}

\usepackage{graphicx}
\usepackage{amsmath}
\usepackage{amssymb}
\usepackage{bm}
\usepackage{prooftree}
\usepackage{hyperref}

\newcommand{\Lang}[1]{{\mathcal L}(#1)}

\newcommand{\emptys}{\langle \, \rangle}

\newcommand{\typeof}[1]{\pvec \tau_{#1}}

\newcommand{\iInter}[3]{#2 \,\mathop{ur}_{#3} #1}

\newcommand{\herbrandInter}[3]{#2 \,\mathop{hr}_{#3} #1}
\newcommand{\kleeneInter}[3]{#2 \,\mathop{r}_{#3} #1}

\newcommand{\kripkeInter}[3]{#2 \,\Vdash_{#3} #1}

\newcommand{\kreiselInter}[3]{#2 \,\mathop{mr}_{#3} #1}
\newcommand{\bergerInter}[2]{#2 \,\mathop{mr}\!{}_{\bot} \, #1}

\newcommand{\pforall}[2]{\forall^{#2} {#1}}
\newcommand{\pexists}[2]{\exists^{#2} {#1}}

\newcommand{\pvec}[1]{\boldsymbol{#1}}
\newcommand{\proves}{\vdash}

\newcommand{\std}{{\sf st}}

\newcommand{\K}{\mathcal{K}_1}
\newcommand{\T}{\mathcal{T}}

\newcommand{\bounded}[4]{#2 \lhd^{#4}_{#1} #3}

\newcommand{\SourceT}{\mathbf{S}}
\newcommand{\TargetT}{\mathbf{T}}

\newcommand{\HA}{\mathbf{HA}}

\newcommand{\NN}{\mathbb{N}}

\newcommand{\nat}{\mathbf{nat}}

\newtheorem{definition}{Definition}[section]
\newtheorem{proposition}[definition]{Proposition}

\newtheorem{theorem}[definition]{Theorem}
\newtheorem{remark}[definition]{Remark}

\title{Uniform Realizability Interpretations}

\author{Ulrich Berger
\institute{Swansea University}
\and Paulo Oliva
\institute{Queen Mary University of London}}

\def\titlerunning{Uniform Realizability Interpretations}
\def\authorrunning{U. Berger \& P. Oliva}

\begin{document}
%
%
%

\maketitle              
\begin{abstract}
This work introduces a novel framework of \emph{uniform realizability} that unifies and generalizes various realizability interpretations of logic, particularly focussing on the treatment of atomic formulas and quantifiers. Traditional realizability interpretations (such as Kleene's number realizability) require explicit witnesses for existential quantifiers. In contrast, newer approaches, such as in the first author's uniform Heyting arithmetic, Herbrand realizability of non-standard arithmetic, or in the ``classical'' realizability of arithmetic, (some) quantifiers, are treated uniformly. The proposed notion of uniform realizability abstracts these differences, parametrising the interpretation by a given treatment of atomic formulas, accounting for both classical and modern variants. The approach is illustrated using several realizability interpretations of Heyting arithmetic.
\end{abstract}


\quad \emph{Dedicated to Stefano Berardi on his 64th Birthday}

\section{Introduction}

Since Kleene's \emph{realizability interpretation} of Heyting arithmetic \cite{Kleene(45)} in 1945 several variants of realizability have been proposed. These include Kreisel's modified realizability \cite{Kreisel(59),Kreisel(62)}, `classical' realizability \cite{BO(02B),BO(02A)}, van den Berg et al. `Herbrand' realizability of non-standard arithmetic \cite{Berg(2012)}, and the Aschieri-Berardi `learning' realizability \cite{Aschieri(2010)}. 

In most of these variants, a realizer of a formula carries enough information to fully or partially witness the formula's positive existential quantifiers. Realizers also (normally) take as input the universally quantified data. For instance, in Kleene's realizability by natural numbers, the clauses for $\exists n A(n)$ and $\forall n A(n)$ are as follows:
\begin{align*}
	\kleeneInter{\,\exists n A(n)}{k}{} 
		& \;\; :\equiv \;
		\kleeneInter{A(k_0)}{k_1}{} \\
	\kleeneInter{\,\forall n A(n)}{k}{} 
		& \;\; :\equiv \;
		\forall n (\{k\}(n) \downarrow \wedge \, \kleeneInter{A(n)}{\{k\}(n)}{}).
\end{align*}
Therefore, a formula $\exists n A(n)$ is realized by a number $k$, if, when viewing $k$ as a pair $\langle k_0, k_1 \rangle$, we have that $k_0$ witnesses $n$, and $k_1$ is a realizer for $A(k_0)$. In the case of the universal quantifier, we view $k$ as a function that, for each $n$, produces a realiser for $A(n)$.

However, in some cases, quantifiers are treated \emph{uniformly}, i.e. a realiser for $\exists n A(n)$ does not explicitly provide information on $n$, and a realizer for $\forall n A(n)$ must be a realizer for $A(n)$ uniformly in $n$ (the same realiser must work for all $n$). For instance, the interpretation of \emph{internal} quantifiers in the Herbrand realizability of non-standard arithmetic \cite{Berg(2012)} is as follows:
\begin{align*}
	\herbrandInter{\exists n A(n)}{\pvec a}{} 
		& \;\; :\equiv \;
		\exists n (\herbrandInter{A(n)}{\pvec a}{}) \\
	\herbrandInter{\forall n A(n)}{\pvec a}{} 
		& \;\; :\equiv \;
		\forall n (\herbrandInter{A(n)}{\pvec a}{}).
\end{align*}
This is also the case for the realizability interpretation of the \emph{non-computational} quantifiers in Uniform Heyting arithmetic \cite{Berger(05c)}.

The situation is even more interesting in other, more recent variants of realizability, such as the Aschieri-Berardi learning realizability \cite{Aschieri(2010)}, where the interpretation is parametrized by a `state of knowledge' $s$, and realizers simply provide a one-time update on the state -- proofs are viewed as state updating transformers.

We will introduce a notion of \emph{uniform realizability} capturing the idea of uniform quantifiers. 
The new notion is parametric in the space of (potential) realizers and the interpretation of atomic predicates,
and contains the examples above as instances,
thus isolating the aspects where they agree and where they differ.

In the following, we make a distinction between unqualified (or untyped or uniform) quantifiers, i.e.
\begin{equation} \label{quant-untyped}
	\exists x A(x) \quad \quad \mbox{and} \quad \quad \forall x A(x),
\end{equation}
and qualified (or typed) quantifiers
\begin{equation} \label{quant-typed}
	\pexists{x}{B} A(x) \quad \quad \mbox{and} \quad \quad \pforall{x}{B} A(x),
\end{equation}
where $B$ is a predicate.

We propose that unqualified quantifiers of the form (\ref{quant-untyped}) should always be given a \emph{uniform realizability interpretation}, while qualified existential and universal quantifications (\ref{quant-typed}) should be treated as abbreviations for 
\begin{equation} \label{quant-typed-abbreviation}
	\exists x(B(x) \wedge A(x)) \quad \quad \mbox{and} \quad \quad \forall x(B(x) \to A(x)),
\end{equation}
respectively, so that their interpretation follows from the (uniform) interpretation of the unqualified quantifiers together with an interpretation for $B(x)$. 
%
In particular, in the case of Heyting arithmetic $\HA$, we write $\pexists{n}{\NN} A(n)$ and $\pforall{n}{\NN} A(n)$ as abbreviations for 
\begin{equation} \label{quant-typed-nat}
	\exists n (\NN(n) \wedge A(n)) \quad \quad \mbox{and} \quad \quad \forall n (\NN(n) \to A(n)),
\end{equation}
so that we only need to focus on the interpretation of the predicate $\NN(n)$.

The uniform realizability interpretation presented here complements the recent work by the second author on a uniform functional interpretation (Diller-Nahm variant) of first-order logic \cite{Oliva(2025)}.

\subsection{Heyting arithmetic $\HA$}
\label{sec-HA}

\newcommand{\suc}{{\sf succ}}

Consider Heyting (intuitionistic) arithmetic $\HA$ formulated with three predicate symbols: falsity $\perp$ (nullary), natural number $\NN$ (unary), and equality $=$ (binary). This means that in $\HA$ we have three kinds of atomic formulas:
\begin{align}
	\perp & \qquad \mbox{falsity} \\
	\NN(n) & \qquad \mbox{$n$ is a number} \\
	n = m & \qquad \mbox{equality}.
\end{align}
We will also assume two function symbols: zero $0$ (nullary) and successor $\suc$ (unary). So, the terms of the language are the numerals: $0, \suc(0), \ldots$, normally denoted by $\overline{n}$ for $\suc^{(n)}(0)$.

\begin{table}[ht] 
\[
    \begin{array}{|rccc|}
        \hline
        & & & \\
        \multicolumn{4}{|c|}{ 
        \begin{prooftree} 
            \justifies 
            \Gamma, A \proves A
            \using (\mbox{logical axiom})
        \end{prooftree}} \\[4mm]
        & 
        \begin{prooftree}
            \Gamma \proves A
            \quad
            \Gamma \proves B
            \justifies 
            \Gamma \proves A \wedge B
            \using (\wedge\mbox{-intro})
        \end{prooftree}
        & &
        \begin{prooftree} 
            \Gamma \proves A_1 \wedge A_2
            \justifies
            \Gamma \proves A_i
            \using (\wedge\mbox{-elim}_i)
        \end{prooftree} \\[5mm]
        & 
        \begin{prooftree}
            \Gamma, A \proves B
            \justifies 
            \Gamma \proves A \to B
            \using (\to\mbox{-intro})
        \end{prooftree}
        & &
        \begin{prooftree} 
            \Gamma \proves A \to B
            \quad 
            \Gamma \proves A
            \justifies
            \Gamma \proves B
            \using (\to\mbox{-elim})
        \end{prooftree} \\[5mm]
        & 
        \begin{prooftree}
            \Gamma \proves A(s)
            \justifies 
            \Gamma \proves \exists n A(n)
            \using (\exists\mbox{-intro})
        \end{prooftree}
        & & 
        \begin{prooftree}
            \Gamma \proves \exists n A(n)
            \quad
            \Gamma, A(n) \proves B
            \justifies
            \Gamma \proves B
            \using (\exists\mbox{-elim})
        \end{prooftree} \\[5mm]
        & 
        \begin{prooftree}
            \Gamma \proves A(n)
            \justifies 
            \Gamma \proves \forall n A(n)
            \using (\forall\mbox{-intro})
        \end{prooftree}
        & & 
        \begin{prooftree}
            \Gamma \proves \forall n A(n)
            \justifies
            \Gamma \proves A(s)
            \using (\forall\mbox{-elim})
        \end{prooftree} \\[6mm]
        \hline
    \end{array}
\]
\caption{Logical axioms and rules of $\HA$} 
\label{HA-logical-rules}
\end{table}

The logical axioms and rules are shown in Figure \ref{HA-logical-rules} (with the usual side-conditions on the quantifier rules) and they correspond to the minimal logic fragment of $\HA$. We consider the following as the \emph{non-logical axioms} of the theory $\HA$. For the predicate $\perp$ we assume the ex-falso-quodlibet axiom schema:
\begin{equation}
	\perp \; \proves A.
\end{equation}
Equality and $\NN(n)$ are assumed to satisfy:
\begin{align}
	& \forall n (n = n) & \mbox{(reflexivity)} \\
	& \forall n, m (n = m \to m = n) & \mbox{(symmetry)} \\
	& \forall n, i, m (n = i \wedge i = m \to n = m) & \mbox{(transitivity)} \\
	& \forall n, m (\NN(n) \wedge n = m \to \NN(m)) & \mbox{($\NN$ closed under equality)}
\end{align}
The arithmetic (Peano) axioms are as follows:
\begin{align}
	& \NN(0) & \mbox{($0$ is in $\NN$)} \\
	& \forall n (\NN(n) \to \NN(\suc(n)) & \mbox{($\NN$ closed under successor)} \\
	& \pforall{n, m}{\NN} (\suc(n) = \suc(m) \to n = m) & \mbox{(successor  injective on $\NN$)} \\
	& \pforall{n}{\NN} (\suc(n) \neq 0) & \mbox{($0$ not successor of $n$ in $\NN$)}
\end{align}
where $\pforall{n}{\NN} A$ abbreviates $\forall n (\NN(n) \to A)$. 

There are two options for the induction scheme, depending on
whether the induction step is restricted to $\NN$ or not:
\begin{equation} \label{ax-induction}
	A(0) \wedge \pforall{n}{\NN} (A(n) \to A(\suc(n)) \to \pforall{n}{\NN} A(n).
\end{equation}
\begin{equation} \label{ax-induction-u}
	A(0) \wedge \forall n (A(n) \to A(\suc(n)) \to \pforall{n}{\NN} A(n).
\end{equation}
The latter is seemingly weaker than the former, but (\ref{ax-induction}) can be obtained from (\ref{ax-induction-u}) by using (\ref{ax-induction-u}) with $A(n) \wedge \NN(n)$. If $\NN(n)$ is interpreted in the standard way, then (\ref{ax-induction}) is realized by the usual primitive recursion operator, while the realizer of (\ref{ax-induction-u}) is a pure iterator, without access to the recursion argument. (\ref{ax-induction-u}) expresses that $\NN$ is the least set containing $0$ and being closed under successor, which is the induction scheme naturally obtained in a formal treatment of general strictly positive inductive definitions (see for example~\cite{BergerTsuiki21}). In Section~\ref{sec-nat} we discuss realizability of induction in case $\NN(n)$ is not interpreted in the standard way.


\section{The Uniform Realizability Interpretation}
\label{sec-uniform}

In the following we will present a general (parametrized) realizability interpretation of an arbitrary \emph{source theory} $\SourceT$. We will later consider the case where $\SourceT$ is the theory $\HA$ or some variant thereof. The realizability interpretation of $\SourceT$ will be into some \emph{target theory} $\TargetT$. We will not give formal conditions on what $\TargetT$ should be, other than that it is closed under the rules of intuitionistic logic and that it has a sort for (potential) realizers, with function application $f(\pvec a)$ -- where $\pvec a$ is a tuple of realizers and the result is possibly undefined -- and $\lambda$-abstraction, i.e. for any term $t[\pvec a]$ there is a term $\lambda \pvec a . t[\pvec a]$ such that $\TargetT$ proves
\begin{equation}
	(\lambda \pvec a . t[\pvec a])(\pvec s) = t[\pvec s]
\end{equation}
%
%
We write $a{\downarrow}$ to denote that $a$ does not diverge. 
We leave it open whether these $\lambda$-terms are typed or untyped, as we will consider both cases. Besides a sort for realizers and a sort for natural numbers (which may or may not be the same) $\TargetT$ may have other sorts and types which, however, we usually leave implicit, unless they improve readability (as, for example, for modified- and Herbrand-realizability in Sections~\ref{sec-kreisel} and~\ref{sec-herbrand}).

\begin{definition}[Base interpretation of $\Lang{\SourceT}$ into $\Lang{\TargetT}$] \label{def-base} A \emph{base interpretation} of $\Lang{\SourceT}$ into $\Lang{\TargetT}$ associates to each $n$-ary predicate symbol $P$ of the language of $\SourceT$ an $(n+m)$-ary relation $\bounded{P}{\pvec x}{\pvec a}{}$ in the language of $\TargetT$, between tuples $\pvec x$ (arity $n$) and $\pvec a$ (arity $m$, for some $m$). We read this as \emph{$\pvec x$ is $P$-bounded by $\pvec a$}.
\end{definition}

We think of the tuple $\pvec a$ as the realizers or witness of $P(\pvec x)$. Either $\pvec x$ or $\pvec a$ could be the empty (nullary) tuple. Whenever we need to explicitly write the empty tuple we will use the symbol $\emptys$. \\[2mm]
{\bf Notation}. If $\pvec f$ is an $m$-tuple, $f_1, \ldots, f_m$, we write $\pvec f(\pvec a)$ for the $m$-tuple $f_1(\pvec a), \ldots, f_m(\pvec a)$, and ${\pvec f}(\pvec a){\downarrow}$ for the conjunction $f_1(\pvec a){\downarrow}\wedge\ldots\wedge f_m(\pvec a){\downarrow}$.

\begin{definition}[Uniform realizability interpretation] \label{def-uniform-realizability} Let a base interpretation of $\Lang{\SourceT}$ into $\Lang{\TargetT}$ be given. For each formula $A$ of $\SourceT$, possibly with free-variables, associate a formula ($\pvec a$ uniformly realises $A$)
\begin{equation}
	\iInter{A}{\pvec a}{}
\end{equation}	
of $\TargetT$, by induction on $A$. For atomic formulas $P(\pvec x)$ the interpretation is as in the base interpretation:
\begin{align}
	\iInter{P(\pvec x)}{\pvec a}{} 
		& \;\; :\equiv \; \bounded{P}{\pvec x}{\pvec a}{}.
\end{align}
So, $\pvec a$ uniformly realizes $P(\pvec x)$ if $\pvec x$ is $P$-bounded by $\pvec a$. 
For composite formulas the interpretation is defined as follows:
\begin{align}
	\iInter{A \wedge B}{\pvec a, \pvec b}{} 
		& \;\; :\equiv \;
		(\iInter{A}{\pvec a}{}) \wedge (\iInter{B}{\pvec b}{}) \\
	\iInter{A \to B}{\pvec f}{} 
		& \;\; :\equiv \;
		\forall \pvec a ((\iInter{A}{\pvec a}{}) \to (\pvec f(\pvec a) {\downarrow}) \wedge (\iInter{B}{\pvec f(\pvec a)}{})) \\
	\iInter{\exists x A(x)}{\pvec a}{} 
		& \;\; :\equiv \;
		\exists x (\iInter{A(x)}{\pvec a}{}) \\
	\iInter{\forall x A(x)}{\pvec a}{} 
		& \;\; :\equiv \;
		\forall x (\iInter{A(x)}{\pvec a}{}).
\end{align}
%
By induction on formulas one easily sees that realizability commutes with substitution, i.e.
\begin{equation} \label{eq-substitution}
   \iInter{\,(A[t/x])}{\pvec a}{} \; 
   \equiv \; (\iInter{A}{\pvec a}{})[t/x]
\end{equation}
provided $x$ does not occur in $\pvec a$.
\end{definition}

\begin{remark}[Total realizers] 
%
%
In case realizers are \emph{total} ($a{\downarrow}$ always holds), 
the clause for implication simplifies to
\begin{align}
	\iInter{A \to B}{\pvec f}{} 
		& \;\; :\equiv \;
		\forall \pvec a ((\iInter{A}{\pvec a}{}) \to (\iInter{B}{\pvec f(\pvec a)}{})). 
\end{align}
\end{remark}

\begin{remark}[Realizing qualified quantifiers] \label{rem-bounded-quantifers} 
Recall that for any basic predicate $P(\cdot)$ in the source theory $\SourceT$ we introduced the qualified quantifiers
\begin{align}
\pforall{\pvec x}{P} A({\pvec x}) 
       & \;\; :\equiv
    \forall {\pvec x}\,(P({\pvec x}) \to A({\pvec x}))\\
\pexists{\pvec x}{P} A({\pvec x}) 
       & \;\; :\equiv
    \exists {\pvec x}\,(P({\pvec x}) \wedge A({\pvec x}))
\end{align}
If we introduce the following abbreviations in the target theory $\TargetT$
\begin{align}
\forall \bounded{P}{\pvec x}{\pvec b}{}\, \phi({\pvec b},{\pvec x})
       & \;\; :\equiv
    \forall {\pvec x}\,(\bounded{P}{\pvec x}{\pvec b}{} \to \phi({\pvec b},{\pvec x}))\\
\exists \bounded{P}{\pvec x}{\pvec b}{} \,\phi({\pvec b},{\vec x})
       & \;\; :\equiv
    \exists {\pvec x}\,(\bounded{P}{\pvec x}{\pvec b}{} \land \phi({\pvec b},{\pvec x})),
\end{align}
%
then we have
\begin{align}
\iInter{\pforall{\pvec x}{P} A({\pvec x})}{{\pvec f}}{}
       & \;\; \Leftrightarrow \;\;
\forall{\pvec b}\, \forall\bounded{P}{\pvec x}{\pvec b}{}\,({\pvec f}({\pvec b}){\downarrow}\wedge\iInter{A({\pvec x})}{{\pvec f}({\pvec b})}{})\\
\iInter{\pexists{\pvec x}{P} A({\pvec x})}{{\pvec b},{\pvec a}}{}
       & \;\; \Leftrightarrow \;\;
\exists\bounded{P}{\pvec x}{\pvec b}{}\,(\iInter{A({\pvec x})}{{\pvec a}}{}).
\end{align}
In the special case that $\bounded{P}{}{}{}$ is equality, i.e.\ $(\bounded{P}{\pvec x}{\pvec b}{}) \Leftrightarrow (\pvec x = \pvec b)$, this simplifies to (using (\ref{eq-substitution}))
\begin{align}
\iInter{\pforall{\pvec x}{P} A({\pvec x})}{{\pvec f}}{}
       & \;\; \Leftrightarrow \;\;
\forall{\pvec b}\,({\pvec f}({\pvec b}){\downarrow}\wedge\iInter{A({\pvec b})}{{\pvec f}({\pvec b})}{})\\
\iInter{\pexists{\pvec x}{P} A({\pvec x})}{{\pvec b},{\pvec a}}{}
       & \;\; \Leftrightarrow \;\;
\iInter{A({\pvec b})}{{\pvec a}}{}.
\end{align}
\end{remark}

\begin{remark}[Uniform predicates and formulas] \label{rem-uniform} Let us call a predicate $P$ \emph{uniformly interpreted}, or just \emph{uniform}, if $\bounded{P}{\pvec x}{\emptys}{} \equiv P(\pvec x)$, and call a formula $A$ \emph{uniform} if it contains only uniform predicates. Then it is easy to see that if $A$ is uniform, then the formula $\iInter{A}{\emptys}{}$ is syntactically identical to $A$ (${\pvec f}({\pvec b}){\downarrow}$ can be omitted since $\pvec f$ will be the empty tuple).
\end{remark}

\begin{definition}[Realizable sequents and formulas] For a fixed base interpretation of $\Lang{\SourceT}$, we say that a sequent $\Gamma \proves A$ of $\SourceT$ is \emph{realizable} if for some $\lambda$-term $\pvec t[\pvec \gamma]$ of $\TargetT$, with $\pvec \gamma$ as the only free-variables, we have
\begin{equation}
	(\pvec \gamma \downarrow), (\iInter{\Gamma}{\pvec \gamma}{}) \proves_{\TargetT} (\pvec t[\pvec \gamma] \downarrow) \wedge (\iInter{A}{\pvec t[\pvec \gamma]}{}).
\end{equation}
A closed formula $A$ is realizable if the sequent $\proves_{\TargetT} A$ is realizable.
\end{definition}

\begin{theorem}[Soundness] 
\label{thm-soundness}
Given a base interpretation of $\Lang{\SourceT}$, if all the non-logical axioms of $\SourceT$ are realizable then all the theorems of $\SourceT$ are realizable.
\end{theorem}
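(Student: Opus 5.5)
The proof goes by induction on the derivation of a sequent $\Gamma \proves A$ in $\SourceT$: the logical rules of Table~\ref{HA-logical-rules} together with the non-logical axioms, the latter used as extra initial sequents. The statement I prove is that every derivable sequent is realizable in the sense of the definition above. For each rule I will give a $\lambda$-term $\pvec t[\pvec\gamma]$ built from the terms supplied by the induction hypothesis for the premises, and check realizability inside $\TargetT$ using only intuitionistic logic and the $\beta$-law for $\lambda$.

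\textbf{Axioms.} Non-logical axioms are realizable by assumption. If they appear with a context $\Gamma$, I simply ignore the extra $\pvec\gamma$; this is a weakening step that needs no work. For the logical axiom $\Gamma, A \proves A$, the realizer is the projection onto the component $\pvec a$ of $\pvec\gamma$ belonging to $A$; convergence comes from the hypothesis $\pvec\gamma{\downarrow}$.

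\textbf{Propositional rules.}
\begin{itemize}
\item $\wedge$-intro: the realizer is the concatenation $\pvec t,\pvec s$ of the premises' realizers.
\item $\wedge$-elim: the realizer is the appropriate subtuple.
\item $\to$-intro: from $\pvec t[\pvec\gamma,\pvec a]$ for $\Gamma,A\proves B$, take $\pvec f :\equiv \lambda\pvec a.\pvec t[\pvec\gamma,\pvec a]$. Totality of $\pvec f$ itself is the step to watch. I need $\lambda$-abstractions to be defined, which I assume of the target theory ($\lambda$-terms are values). Then $\pvec f(\pvec a) = \pvec t[\pvec\gamma,\pvec a]$, and the induction hypothesis applied under the hypotheses $\pvec a{\downarrow}$ and $\iInter{A}{\pvec a}{}$ gives exactly the implication clause. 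Here I must check that the clause quantifies over $\pvec a$ with $\pvec a{\downarrow}$, since variables range over values; otherwise I add that assumption.
\item $\to$-elim: use $\pvec f(\pvec a)$, where $\pvec f$ and $\pvec a$ are the premises' realizers.
\end{itemize}

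\textbf{Quantifier rules.} This is where uniformity makes things easy: the realizer does not change at all.
\begin{itemize}
\item $\exists$-intro: from $\iInter{A(s)}{\pvec t}{}$ and the substitution lemma (\ref{eq-substitution}) we get $\exists x\,(\iInter{A(x)}{\pvec t}{})$.
\item $\forall$-intro: by the eigenvariable condition $n$ is not free in $\Gamma$. The realizer $\pvec t[\pvec\gamma]$ contains only $\pvec\gamma$ free, so $n$ does not occur in it. Generalizing over $n$ in $\TargetT$ therefore gives $\forall n\,(\iInter{A(n)}{\pvec t}{})$.
\item $\forall$-elim: instantiate with $s$ and use (\ref{eq-substitution}).
\item $\exists$-elim: with $\pvec t$ realizing $\exists n A(n)$ and $\pvec s[\pvec\gamma,\pvec a]$ realizing $\Gamma,A(n)\proves B$, take $\pvec s[\pvec\gamma,\pvec t[\pvec\gamma]]$. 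In $\TargetT$, destruct $\exists n\,(\iInter{A(n)}{\pvec t}{})$. Since $n$ is free neither in $\Gamma$ nor in $B$, and also not in $\pvec s$ or $\pvec t$, the induction hypothesis, instantiated at $\pvec a:=\pvec t$ (whose convergence is given), yields the conclusion independently of $n$.
\end{itemize}

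The main obstacle is bookkeeping rather than depth. There are two points to get right: that the realizing terms contain no source variables, which is what the definition's requirement of only $\pvec\gamma$ free guarantees and what the eigenvariable steps above rely on; and the convergence side-conditions, where I must make sure $\lambda$-terms are defined and substitute only convergent terms. Everything else is routine verification.
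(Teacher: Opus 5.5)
Your proposal is correct and takes the same route as the paper. The argument is induction on derivations, using the same realizers: projection for the logical axiom, $\lambda$-abstraction for $\to$-intro, application for $\to$-elim, and unchanged realizers for the quantifier rules via the substitution property (\ref{eq-substitution}); beyond the paper, you spell out the $\wedge$, $\forall$-intro/elim and $\exists$-elim cases and the bookkeeping the paper leaves under ``the other cases are treated similarly'' (realizers having only $\pvec\gamma$ free, $\lambda$-terms being defined, variables ranging over defined values).
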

{\bf Proof}. 
By induction on derivations one shows that all derivable sequents are realizable.
The non-logical axioms are realizable by assumption. Since the quantifiers are treated uniformly, their interpretation is straightforward -- making use of observation (\ref{eq-substitution}). For instance: \\[1mm]
\emph{$\exists$-introduction}. Assuming
\[ 
	(\pvec \gamma \downarrow), (\iInter{\Gamma}{\pvec \gamma}{}) \proves (\pvec t[\pvec \gamma] \downarrow) \wedge (\iInter{A(s)}{\pvec t[\pvec \gamma]}{}),
\]
since $(\iInter{A(s)}{\pvec t[\pvec \gamma]}{}) \equiv (\iInter{A(x)}{\pvec t[\pvec \gamma]}{})[s/x]$, we have
\[ 
	(\pvec \gamma \downarrow), (\iInter{\Gamma}{\pvec \gamma}{}) \proves (\pvec t[\pvec \gamma] \downarrow) \wedge \exists x (\iInter{A(x)}{\pvec t[\pvec \gamma]}{})
\]
and hence
\[ 
	(\pvec \gamma \downarrow), (\iInter{\Gamma}{\pvec \gamma}{}) \proves (\pvec t[\pvec \gamma] \downarrow) \wedge (\iInter{\exists x A(x)}{\pvec t[\pvec \gamma]}{}).
\]
\emph{Logical axiom}. We can take $\pvec t[\pvec a] = \pvec a$ since we have:
\[
	(\pvec a \downarrow) , (\iInter{A}{\pvec a}{}) \proves (\pvec a \downarrow) \wedge (\iInter{A}{\pvec a}{}).
\]
\emph{$\to$-intro}. Assuming
\[ 
	(\pvec \gamma \downarrow), (\iInter{\Gamma}{\pvec \gamma}{}), (\iInter{A}{\pvec a}{}) \proves (\pvec t[\pvec \gamma, \pvec a] \downarrow) \wedge (\iInter{B}{\pvec t[\pvec \gamma, \pvec a]}{}).
\]
we have
\[ 
	(\pvec \gamma \downarrow), (\iInter{\Gamma}{\pvec \gamma}{}) \proves (\lambda \pvec a . \pvec t[\pvec \gamma, \pvec a] \downarrow) \wedge (\iInter{A \to B}{\lambda \pvec a . \pvec t[\pvec \gamma, \pvec a]}{})
\]
using $(\lambda \pvec a . \pvec t[\pvec \gamma, \pvec a])(\pvec a) = \pvec t[\pvec \gamma, \pvec a]$. \\[2mm]
\emph{$\to$-elim}. Assume
\[ 
	(\pvec \gamma \downarrow), (\iInter{\Gamma}{\pvec \gamma}{}) 
		\proves 
		(\pvec t [\pvec \gamma] \downarrow) \wedge 
			\forall \pvec a (\iInter{A}{\pvec a}{} \to \pvec t[\pvec \gamma](\pvec a) \downarrow 
				\wedge \, (\iInter{B}{\pvec t[\pvec \gamma](\pvec a)}{}))
\]
and
\[ 
	(\pvec \gamma \downarrow), (\iInter{\Gamma}{\pvec \gamma}{})
		\proves \pvec s[\pvec \gamma] \downarrow \wedge \, (\iInter{A}{\pvec s[\pvec \gamma]}{}).
\]
Then
\[ 
	(\pvec \gamma \downarrow), (\iInter{\Gamma}{\pvec \gamma}{})
		\proves 
		\pvec t[\pvec \gamma] (\pvec s[\pvec \gamma]) \downarrow 
			\wedge \, (\iInter{B}{\pvec t[\pvec \gamma] (\pvec s[\pvec \gamma])}{}).
\]
The other cases are treated similarly. \hfill $\Box$ \\

The definition \ref{def-uniform-realizability} describes how we can extend a given \emph{base} interpretation to a \emph{full} interpretation. In the following section, we will look at particular choices of base interpretation for $\HA$ and show that the full interpretations obtained coincide with (or are very close to) various well-known realizability interpretations of $\HA$. First, however, let us consider some general (abstract) base interpretations of $n = m$, $\bot$ and $\NN(n)$.

\subsection{Interpreting equality $n = m$}
\label{sec-equality}

Let us first consider general interpretations of the (binary) equality predicate $n = m$. In most of our instances we will assume that $n = m$ is interpreted uniformly. For the Aschieri-Berardi learning realizability, however, a non-trivial interpretation of equality of the form 
\begin{equation}
	(n, m) \triangleleft_{=} a \quad :\equiv \quad \phi(a) \to n = m,
\end{equation}
for some formula $\phi(a)$, is used.
Then, for the equality axioms, symmetry and transitivity, to be realizable, one needs terms $s$ and $t$ such that
\begin{align}
	\phi(s(a)) \proves_{\TargetT} \; 
		& \phi(a) & \mbox{(symmetry)} \\
	\phi(t(a_1, a_2)) \proves_{\TargetT} \; 
		& \phi(a_1) \wedge \phi(a_2) & \mbox{(transitivity)}
\end{align}
while the axiom of reflexivity follows from reflexivity itself by weakening. More on this in Section \ref{sec-learning} (see proof of Proposition \ref{prop-berardi}).

\subsection{Interpreting falsity $\bot$}
\label{sec-falsity}

As with the equality predicate discussed above, in most of our instances we will assume that falsity $\bot$ is interpreted uniformly. However, for the classical (Section \ref{sec-classical}) and learning (Section \ref{sec-learning}) realizability, a non-trivial interpretation of $\bot$ is assumed. Consider a reasonably general case where $\bot$ is witnessed by a unary predicate $\psi(a)$:
\begin{equation}
	\emptys \triangleleft_{\bot} a \quad :\equiv \quad \psi(a).
\end{equation}
Consider the ex-falso-quodlibet axiom schema:
\[
	\perp \; \proves A.
\]
We would need to have terms $\pvec t[\pvec a]$ (depending on $A$) such that 
\begin{equation}
	\psi(a) \proves_{\TargetT} \iInter{A}{\pvec t[a]}{}.
\end{equation}
Friedman's \cite{Friedman(78)} and Dragalin's \cite{Dragalin(80)} idea is to also interpret all atomic formulas (including equality) using the same unary predicate, e.g.
\begin{equation}
	(n, m) \triangleleft_{=} a \quad :\equiv \quad (n = m) \vee \psi(a).
\end{equation}
In this way, one can easily define $\pvec t[\pvec a]$ for every formula $A$.

\subsection{Interpreting $\NN(n)$} 
\label{sec-nat}

In the following, we will consider different interpretations $n \triangleleft_{\NN} a$ of the atomic formula $\NN(n)$. They can be classified as follows:
\begin{itemize}
    \item \emph{Uniform interpretations}: In this case $a$ is the empty tuple and $\NN(n)$ is always realizable:
    \[ n \triangleleft_{\NN} \emptys \quad :\equiv \quad \mathrm{true}, \]
    which corresponds to treating $\pforall{n}{\NN} A(n)$ and $\pexists{n}{\NN} A(n)$ simply as $\forall n A(n)$ and $\exists n A(n)$ (uniform quantifications).
    This will be used in Section \ref{sec-herbrand} (Herbrand realizability) to give \emph{internal quantifiers} a uniform interpretation.
    \item \emph{Approximate interpretations}: In this case $a$ provides some partial information about $n$. For instance, $a$ could be a finite set of numbers that contains $n$
    \[ n \triangleleft_{\NN} a \quad :\equiv \quad n \in a. \]
    An example of this is Lifschitz' realizability~\cite{vanOosten90}.
    Also, quantification over standard natural numbers $\std(n)$, so-called \emph{external quantifiers} in the Herbrand realizability (Section \ref{sec-herbrand}) follows this approach by interpreting $\std(n)$ as
    \[ n \triangleleft_{\std} a \quad :\equiv \quad n \in a. \]
    Another example of an approximate interpretation would be to treat the realizer for $\NN(n)$ as an upper bound on $n$:
    \[ n \triangleleft_{\NN} a \quad :\equiv \quad n \leq a. \]
    This is used in bounded modified realizability interpretation~\cite{Ferreira(05A)}.
    \item \emph{Precise interpretation}: In this case $a$ is exactly $n$. Therefore, $n$ being a natural number is witnessed by a number $a$ such that $n = a$. Formally:
    \[ n \triangleleft_{\NN} a \quad :\equiv \quad n = a. \]
    Examples of this are the Kleene and Kreisel realizability interpretations (Sections \ref{sec-kleene} and \ref{sec-kreisel}).
\end{itemize}
In all cases one must choose the interpretation in such a way that all non-logical axioms involving the predicate $\NN(n)$ can be realized. For instance, let us consider the interpretation of the induction schema if $n \triangleleft_{\NN} m \; :\equiv \; n \leq m$: Given a realiser $\pvec t$ for $A(0)$, i.e.,
\begin{equation} \label{ind-base}
    \iInter{A(0)}{\pvec t}{},    
\end{equation}
and a realizer $\pvec \phi$ for the induction step,
\begin{equation} \label{ind-step}
    \forall m \forall n \leq m \forall \pvec a ((\iInter{A(n)}{\pvec a}{}) \to (\iInter{A(n + 1)}{\pvec \phi(m, \pvec a)}{})),
\end{equation}
we have to compute a realizer $\pvec \psi$ of $\pforall{n}{\NN} A(n)$, i.e.
\begin{equation}\label{ind-conclusion}
\forall m\forall n \le m (\iInter{A(n)}{\pvec \psi(m)}{}).
\end{equation}
Notice that the realizer $\pvec \psi(m)$ must work uniformly for all $n \leq m$.
This suggests to assume a partial order w.r.t. which realizers are upwards closed
\begin{equation} \label{upper-bound}
   {\pvec a} \preceq {\pvec b}\ \to\ \iInter{A}{\pvec a}{}\ \to\ \iInter{A}{\pvec b}{}
\end{equation}
and also an operation $\pvec a \cup \pvec b$ that computes an upper bound of $\pvec a$ and $\pvec b$ w.r.t.\ $\preceq$.
We can then construct $\pvec\psi$ satisfying (\ref{ind-conclusion}) by primitive recursion:
\begin{equation}
\label{eq-primrec-join}
\pvec \psi(m) =
   \left\{
        \begin{array}{ll}
             \pvec t & m = 0  \\
             \pvec \psi(m-1) \cup \pvec \phi(m-1, \pvec \psi(m-1)) & m > 0.
        \end{array}
   \right.
\end{equation}
Now, using the abbreviation 
\begin{equation}
   \beta(m) \equiv \forall n \leq m (\iInter{A(n)}{\pvec \psi(m)}{}),
\end{equation}
%
we see that (\ref{ind-base}) is equivalent to $\beta(0)$, while (\ref{ind-step}) implies (taking $\pvec a = \pvec \psi(m)$ and using (\ref{upper-bound})), 
%
\[ 
   \forall m (\beta(m) \to \beta(m+1)).
\]
Therefore, by induction, $\forall m \beta(m)$, i.e.~(\ref{ind-conclusion}) holds.

In the case of the interpretation by finite sets ($n \triangleleft_{\NN} a \equiv n \in a$ where $a$ ranges over finite sets of natural numbers) the realization of induction is similar, but slightly more involved (see Section~\ref{sec-herbrand}). 

For the precise interpretation ($n \triangleleft_{\NN} a \equiv n = a$) the realization of induction is
similar to (\ref{eq-primrec-join}), but without the join operation:
\begin{equation}
\label{eq-primrec-precise}
\pvec \psi(m) =
   \left\{
        \begin{array}{ll}
             \pvec t & m = 0  \\
             \pvec \phi(m-1, \pvec \psi(m-1)) & m > 0,
        \end{array}
   \right.
\end{equation}
where for $\phi$ to realize the step now means
\begin{equation} \label{ind-step-precise}
    \forall m \forall \pvec a ((\iInter{A(m)}{\pvec a}{}) \to (\iInter{A(m + 1)}{\pvec \phi(m, \pvec a)}{})).
\end{equation}

\section{Instances}
\label{sec-instances-K}

In this section, we consider several concrete choices of base interpretations.

\subsection{Kleene Realizability}
\label{sec-kleene}

Let us start by considering Kleene's original realizability notion, which is based on the partial combinatory algebra $\K$. 
In this case, all realizers are natural numbers and we write the partial application operation, $e(\pvec a)$, as $\{e\}(\pvec a)$
%
%
The target theory is in this case $\HA$, however, in the traditional formulation, i.e.\ without the predicate $\NN$. 


\begin{definition}[Kleene base interpretation] We consider the following base interpretation of the three predicate symbols of $\HA$:
\begin{equation} \label{base-kleene}
\begin{array}{rl}
	\bounded{\,\perp}{\emptys\,}{\emptys}{} & \;\; :\equiv \;\; \perp \\[2mm]
	\bounded{\NN}{n}{m}{} & \;\; :\equiv \;\; n = m \\[2mm]
	\bounded{\;=}{(n, m)}{\emptys}{} & \;\; :\equiv \;\; n = m.
\end{array}
\end{equation}
Let us then write ``$\kleeneInter{A}{\pvec a}{}$'' for the instance of the uniform realizability interpretation ``$\iInter{A}{\pvec a}{}$'' (Def. \ref{def-uniform-realizability}) obtained from the Kleene's first algebra with the base interpretation (\ref{base-kleene}). 
\end{definition}

The following proposition shows that the instantiation above essentially yields Kleene's realizability interpretation\footnote{The (inessential) difference is that Kleene codes tuples of realizers into a single number.}: 

\begin{proposition}[Kleene realizability \cite{Kleene(45)}] \label{prop-kleene} The following equivalences hold:
\begin{align*}
	\kleeneInter{\,(n = m)}{\emptys}{} 
		& \;\; \Leftrightarrow \;
		n = m \\
	\kleeneInter{A \wedge B}{\pvec a, \pvec b}{} 
		& \;\; \Leftrightarrow \;
		(\kleeneInter{A}{\pvec a}{}) \wedge (\kleeneInter{B}{\pvec b}{}) \\
	\kleeneInter{A \to B}{\pvec b}{} 
		& \;\; \Leftrightarrow \;
		\forall \pvec a ((\kleeneInter{A}{\pvec a}{}) \to (\{\pvec b\}(\pvec a) \downarrow 
			\wedge \, \kleeneInter{B}{\{\pvec b\}(\pvec a)}{})) \\
	\kleeneInter{\pexists{n}{\NN} A(n)}{m, \pvec a}{} 
		& \;\; \Leftrightarrow \;
		\kleeneInter{A(m)}{\pvec a}{} \\
	\kleeneInter{\pforall{n}{\NN} A(n)}{\pvec a}{} 
		& \;\; \Leftrightarrow \;
		\forall n (\{\pvec a\}(n) \downarrow \wedge \, \kleeneInter{A(n)}{\{\pvec a\}(n)}{}).
\end{align*}
\end{proposition}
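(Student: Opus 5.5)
The plan is to unfold Definition~\ref{def-uniform-realizability} with the Kleene base interpretation (\ref{base-kleene}) and check each clause in turn. Throughout we use two facts. First, in $\K$ every realizer is a number, hence total as a value: $a{\downarrow}$ holds for every variable $a$. Second, application $\pvec f(\pvec a)$ is interpreted as $\{\pvec f\}(\pvec a)$.

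The first three equivalences are essentially immediate. For equality, $\kleeneInter{(n=m)}{\emptys}{}$ is by definition $\bounded{=}{(n,m)}{\emptys}{}$, which is $n=m$. The conjunction clause is literally the defining clause. The implication clause is the defining clause with application written as $\{\pvec b\}(\pvec a)$.

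For the qualified quantifiers we invoke Remark~\ref{rem-bounded-quantifers}, in the special case where $\bounded{\NN}{}{}{}$ is equality, since $\bounded{\NN}{n}{m}{}\equiv n=m$.

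\emph{Existential.} We unfold $\pexists{n}{\NN}A(n)\equiv\exists n(\NN(n)\wedge A(n))$. Its realizers are pairs $m,\pvec a$, and
\[
\kleeneInter{\pexists{n}{\NN}A(n)}{m,\pvec a}{}\;\equiv\;\exists n(n=m\wedge \kleeneInter{A(n)}{\pvec a}{}).
\]
In $\HA$ this is equivalent to $\kleeneInter{A(m)}{\pvec a}{}$, using the substitution property (\ref{eq-substitution}) and the fact that $n$ does not occur in $m,\pvec a$.

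\emph{Universal.} $\pforall{n}{\NN}A(n)\equiv\forall n(\NN(n)\to A(n))$ gives
\[
\forall n\forall b\,\bigl(b=n\to \{\pvec a\}(b){\downarrow}\wedge \kleeneInter{A(n)}{\{\pvec a\}(b)}{}\bigr).
\]
Eliminating $b$ via the equation and (\ref{eq-substitution}) yields the stated form.

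The only point needing care is the bookkeeping of variable freshness when applying (\ref{eq-substitution}). One must check that $n$ does not occur in the realizer terms $m$, $\pvec a$, $\{\pvec a\}(b)$, which can be arranged by renaming bound variables. Also, the $\downarrow$ conjunct on the left of the implication clause can be dropped because variables in $\K$ denote numbers. No induction on formulas is needed, since each clause only unfolds one layer of the definition.
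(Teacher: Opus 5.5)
Your proposal is correct and takes the same route as the paper, which simply calls the result immediate: unfold Definition~\ref{def-uniform-realizability} under the base interpretation (\ref{base-kleene}), and obtain the two qualified-quantifier clauses as the equality special case of Remark~\ref{rem-bounded-quantifers}. One small inaccuracy is in your closing remark about dropping a $\downarrow$ conjunct ``on the left'' of the implication clause: that clause has no definedness premise, and the $\downarrow$ on the right must be kept (as the statement does), because application in $\K$ is partial.
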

{\bf Proof.} 
Immediate. The equivalences for the qualified quantifiers $\pexists{n}{\NN} A(n)$ and $\pforall{n}{\NN} A(n)$ follow from Remark~(\ref{rem-bounded-quantifers}). \hfill $\Box$

\begin{proposition}[Soundness of Kleene realizability]
\label{prop-kleene-sound}
All theorems of\\ $\HA$ are Kleene-realizable.
\end{proposition}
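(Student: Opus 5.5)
By Theorem~\ref{thm-soundness} it suffices to check that every non-logical axiom of $\HA$ is Kleene-realizable under the base interpretation (\ref{base-kleene}). The plan is to go through the axioms one at a time. For each, we compute its realizability formula, using Remark~\ref{rem-bounded-quantifers} for the qualified quantifiers, and then give an explicit index of a partial recursive function, using the $s$-$m$-$n$ theorem as the $\lambda$-abstraction of $\K$. Every verification is carried out in $\HA$ (without $\NN$), which serves as the target theory.

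\emph{Uniform axioms.} Reflexivity, symmetry and transitivity of equality mention only $=$, which is uniform. For them, Remark~\ref{rem-uniform} says that $\kleeneInter{A}{\emptys}{}$ is literally $A$, up to the trivial realizers of implications. Concretely, we realize each implication by a total index returning the empty tuple, and each such formula is then provable in the target $\HA$. The axiom $\perp \proves A$ is realized as follows: since $\bounded{\perp}{}{}{}$ is $\perp$, we take arbitrary terms $\pvec t$ (say all $0$) and use ex falso in the target.

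\emph{Axioms involving $\NN$.} Since $\bounded{\NN}{n}{m}{} \equiv n=m$, the special case of Remark~\ref{rem-bounded-quantifers} applies. For ``$\NN$ closed under equality'', a realizer must map $b$ (with $n=b$ and $n=m$) to some $c$ with $m=c$, so the identity index works. $\NN(0)$ is realized by the numeral $0$. $\forall n(\NN(n)\to\NN(\suc(n)))$ is realized by an index of the successor function. Injectivity of successor and $\suc(n)\neq 0$ become, after unfolding, $\forall b\,(\ldots)$ statements about numbers whose realizers are indices of constant functions returning indices of constant functions, and these are true in $\HA$. For $\suc(n)\neq 0$, i.e.\ $\suc(b)=0\to\perp$, the inner realizer is never applied to a valid input, so any index works.

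\emph{Induction.} This is the main step. We use form (\ref{ax-induction}), since (\ref{ax-induction-u}) follows from it. Given a realizer $\pvec t$ of $A(0)$ and a realizer $\pvec\phi$ of the step, which by Remark~\ref{rem-bounded-quantifers} satisfies (\ref{ind-step-precise}) up to the currying of the two implications and definedness conditions, we define $\pvec\psi$ by the primitive recursion (\ref{eq-primrec-precise}). By the recursion theorem this is an index partial-recursive in $\pvec t,\pvec\phi$. The realizer of the whole axiom is then the index $\lambda \pvec t,\pvec\phi.\,\pvec\psi$, obtained by $s$-$m$-$n$. The correctness claim $\forall m(\{\pvec\psi\}(m){\downarrow}\wedge \kleeneInter{A(m)}{\{\pvec\psi\}(m)}{})$ is proved by induction on $m$ inside the target $\HA$, applied to the formula $\kleeneInter{A(m)}{\ldots}{}$. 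This is exactly why the target must be $\HA$ with full induction.

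The obstacle I expect is bookkeeping rather than mathematics. Tuples of realizers must be handled via indices, and definedness of $\{\pvec\phi\}(m)$ and its further applications must be propagated through each nested implication. The conjunction in the antecedent of (\ref{ax-induction}) gives a pair $(\pvec t,\pvec\phi)$ rather than a curried function, so the index has to be adjusted accordingly.
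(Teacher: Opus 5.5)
Your proposal is correct and takes essentially the paper's route: reduce via Theorem~\ref{thm-soundness} to the non-logical axioms, observe that the equality axioms, injectivity and $\suc(n)\neq 0$ are interpreted essentially by themselves, and realize induction by the recursion (\ref{eq-primrec-precise}), checked by induction in the target $\HA$. Your explicit realizers ($0$ for $\NN(0)$, successor for closure under $\suc$, identity for closure under equality) are more precise than the paper's ``translated into themselves, or realized by the identity function''; one small slip is that an implication whose conclusion carries no realizer is realized by the empty tuple itself (Remark~\ref{rem-uniform}), not by ``indices of constant functions'', which does not affect the argument.
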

{\bf Proof.} One can either transfer Kleene's original proof, via a computable coding of tuples, or use the general Soundness Theorem (Thm.~\ref{thm-soundness}) which reduces the problem to showing that the nonlogical axioms of $\HA$ are realizable. The axioms of reflexivity, symmetry, and transitivity for equality do not contain the predicate $\NN$ and are therefore interpreted by themselves.
The realizability of the induction scheme is discussed in Section \ref{sec-nat}. The remaining Peano axioms are either translated into themselves, or realized by the identity function, as one sees from Remark \ref{rem-bounded-quantifers}. \hfill $\Box$

\subsection{Kreisel Modified Realizability}
\label{sec-kreisel}

Instead of using Kleene's first algebra $\K$ we can instead use G\"odel's finite-type primitive recursive functionals, formalised in G\"odel's system $\T$. In this case the realizers are terms of system $\T$, and application $t(s)$ is the usual function application $t(s)$ where $t \colon \rho \to \tau$ and $s \colon \rho$. The crucial difference to Kleene realizability is that all realizers are now total and application is a total operation.
The base interpretation for modified realizability is (\ref{base-kleene}), as for Kleene. 
The target system $\TargetT$ is now Heyting arithmetic in finite types, $\HA^\omega$~ \cite{Troelstra(73)} (or a version of it, depending on the exact choice of the source system $\SourceT$ which may be the theory $\HA$ or a version of $\HA^\omega$).

%
Let us write ``$\kreiselInter{A}{\pvec a}{}$'' for the realizability interpretation ``$\iInter{A}{\pvec a}{}$'' (Def. \ref{def-uniform-realizability}) obtained from the primitive recursive functionals with the base interpretation (\ref{base-kleene}). 

\begin{definition} To each formula $A$ of $\HA$ let us associate a tuple of types $\typeof{A}$ inductively as follows: For the atomic formulas we have
\[
	\typeof{\perp} \; := \; \emptys 
	\qquad
	\typeof{n = m} \; := \; \emptys 
	\qquad
	\typeof{\NN(n)} \; := \; \nat
\]
and inductively
\begin{align*}
	\typeof{A \wedge B} 
		& \; := \; \typeof{A}, \typeof{B} &
	\typeof{A \to B} 
		& \; := \;
		\typeof{A} \to \typeof{B} \\
	\typeof{\exists n A} 
		& \; := \;
		\typeof{A} &
	\typeof{\forall n A} 
		& \; := \;
		\typeof{A}.
\end{align*}
where for tuples of types $\pvec \sigma = \sigma_1, \ldots, \sigma_n$ and $\pvec \rho = \rho_1, \ldots, \rho_m$, we set
\[
\pvec\sigma \to \pvec\rho := \sigma_1\to\ldots\to\sigma_n\to\rho_1,\ldots,\sigma_1\to\ldots\to\sigma_n\to\rho_m.
\]
\end{definition}
Clearly, if $\kreiselInter{A}{\pvec a}{}$, then $\pvec a$ is of type $\typeof{A}$, which we sometimes write as $\kreiselInter{A}{\pvec a^{\typeof{A}}}{}$.

\begin{proposition}[Kreisel modified realizability \cite{Kreisel(59),Kreisel(62)}] \label{prop-kreisel} The following equivalences hold:
\begin{align*}
	\kreiselInter{\,(n = m)}{\emptys}{} 
		& \;\; \Leftrightarrow \;
		n = m \\
	\kreiselInter{A \wedge B}{\pvec a^{\typeof{A}}, \pvec b^{\typeof{B}}}{} 
		& \;\; \Leftrightarrow \;
		(\kreiselInter{A}{\pvec a}{}) \wedge (\kreiselInter{B}{\pvec b}{}) \\
	\kreiselInter{A \to B}{\pvec f^{\typeof{A} \to \typeof{B}}}{} 
		& \;\; \Leftrightarrow \;
		\forall \pvec a^{\typeof{A}} ((\kreiselInter{A}{\pvec a}{}) \to (\kreiselInter{B}{\pvec f(\pvec a)}{})) \\
	\kreiselInter{\exists n^{\NN} A(n)}{m^\nat, \pvec a^{\typeof{A}}}{} 
		& \;\; \Leftrightarrow \;
		\kreiselInter{A(m)}{\pvec a}{} \\
	\kreiselInter{\pforall{n}{\NN} A(n)}{\pvec f^{\nat \to \typeof{A}}}{} 
		& \;\; \Leftrightarrow \;
		\forall n^{\nat} (\kreiselInter{A(n)}{\pvec f(n)}{}).
\end{align*}
\end{proposition}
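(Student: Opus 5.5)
The plan is to unfold Definition~\ref{def-uniform-realizability} for the base interpretation (\ref{base-kleene}), with realizers now terms of system $\T$, and to compare the result clause by clause with the listed equivalences. As in Proposition~\ref{prop-kleene}, most clauses should be immediate; the only real work concerns implication and the qualified quantifiers.

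\emph{Equality and conjunction.} Since $\bounded{\;=}{(n,m)}{\emptys}{} \equiv n = m$, the equality clause holds by definition. The conjunction clause is literally the defining clause of $\iInter{A \wedge B}{\pvec a, \pvec b}{}$. Along the way I would note, by a simple induction on $A$, that any $\pvec a$ occurring in $\kreiselInter{A}{\pvec a}{}$ has type $\typeof{A}$; the type annotations are thus consistent with the definition of $\typeof{\cdot}$.

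\emph{Implication.} This is the step where modified realizability genuinely departs from the Kleene case. In $\HA^\omega$ every term of system $\T$ denotes a total object and application is total, so $\pvec f(\pvec a){\downarrow}$ is provable. Hence we are in the situation of the Remark on total realizers, and the conjunct $\pvec f(\pvec a){\downarrow}$ can be dropped. This yields $\forall \pvec a^{\typeof{A}}((\kreiselInter{A}{\pvec a}{}) \to (\kreiselInter{B}{\pvec f(\pvec a)}{}))$. The quantifier over $\pvec a$ ranges over type $\typeof{A}$, which matches the typing of $\pvec f$ as $\typeof{A}\to\typeof{B}$.

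\emph{Qualified quantifiers.} The base interpretation of $\NN$ is precise, namely $\bounded{\NN}{n}{m}{} \equiv n = m$, with $m$ of type $\nat$. I would therefore invoke the equality special case of Remark~\ref{rem-bounded-quantifers}, again deleting the definedness conjunct by totality. For $\pexists{n}{\NN}A(n)$ the Remark gives $\exists n(n = m \wedge \kreiselInter{A(n)}{\pvec a}{})$, and by the substitution property (\ref{eq-substitution}) this is equivalent to $\kreiselInter{A(m)}{\pvec a}{}$; here $n$ does not occur in $\pvec a$, since $n$ is the bound variable. For $\pforall{n}{\NN}A(n)$ the Remark gives $\forall m^{\nat}\forall n(n = m \to \kreiselInter{A(n)}{\pvec f(m)}{})$, which collapses to $\forall n^{\nat}(\kreiselInter{A(n)}{\pvec f(n)}{})$.

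The only subtlety I foresee is a mismatch of sorts. The source variable $n$ in $\forall n$ is untyped, whereas the realizer $m$ has type $\nat$, so I must make sure that $\HA^\omega$ identifies the number sort of the source with the type $\nat$ when performing the substitution. I would address this by stipulating that source numerals and variables are read as objects of type $\nat$ in the target theory. Once that convention is fixed, the equality $n = m$ lets us substitute freely, and everything else is routine.
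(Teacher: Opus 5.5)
Your proposal is correct and takes essentially the same approach as the paper. Like the paper, you unfold Definition~\ref{def-uniform-realizability} with the Kleene base interpretation, obtain the qualified-quantifier clauses from the equality case of Remark~\ref{rem-bounded-quantifers}, and drop all definedness conjuncts because $\T$-terms and application are total. Your remark about reading source number variables as objects of type $\nat$ makes explicit a convention the paper leaves implicit.
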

{\bf Proof.} 
The proof is very similar to that of Proposition \ref{prop-kleene}, except that now 
realizers are
total and therefore definedness statements can be omitted.
%
%
\hfill $\Box$

\begin{remark}[Independence of premise]
One of the reasons why Kreisel introduced modified realizability is that it realizes the following \emph{Independence of Premise} schema
\begin{equation}
\label{eq-ipn}
(A \to \pexists{x}{\NN} B(x)) \to \pexists{x}{\NN }(A \to B(x))
\end{equation}
where $A$ is a negated formula that does not contain $x$ free.
If one drops in (\ref{eq-ipn}) the relativizations to $\NN$, the resulting schema is still modified realizable if one accepts the same schema in the target system. This is so, since
\[\kreiselInter{\,(A \to \exists x B(x)) \to \exists x (A \to B(x))}{\pvec f}{}\] 
is equivalent to 
\[\forall \pvec b\,((\kreiselInter{A}{\emptys}{} \to \exists x (\kreiselInter{B(x)}{\pvec b}{})) \to \exists x (\kreiselInter{A}{\emptys}{} \to \kreiselInter{B(x)}{\pvec f(\pvec b)}{}))\] 
which, when choosing for $\pvec f$ the identity function, follows from another instance of the same schema.
\end{remark}

\subsection{Herbrand Realizability}
\label{sec-herbrand}

For the Herbrand (non-standard) realizability interpretation~\cite{Berg(2012)} we consider an extension of $\HA$ with an extra predicate symbol $\std(\cdot)$. Intuitively, $\NN(n)$ denotes any natural number, standard or non-standard, whereas $\std(n)$ denotes a standard natural number. We then take the total primitive recursive functionals $\T^*$ extended with \emph{star types}, $\tau^*$, denoting non-empty finite sets, and the following base interpretation of the atomic formulas of $\HA$:
\begin{equation} \label{base-herbrand}
\begin{array}{rl}
	\bounded{\,\perp}{\emptys\;}{\emptys}{} & \;\; :\equiv \;\; \perp \\[2mm]
	\bounded{\NN}{n}{\emptys}{} & \;\; :\equiv \;\; {\sf true} \\[2mm]
	\bounded{\std}{n}{S}{} & \;\; :\equiv \;\; n \in S \\[2mm]
	\bounded{=}{(n,m)}{\emptys}{} & \;\; :\equiv \;\; n = m.
\end{array}
\end{equation}
Let us write ``$\herbrandInter{A}{\pvec a}{}$'' for the realizability interpretation ``$\iInter{A}{\pvec a}{}$'' (Def. \ref{def-uniform-realizability}) obtained from $\T^*$ with the base interpretation (\ref{base-herbrand}). 

\begin{proposition}[Herbrand realizability \cite{Berg(2012)} -- variant] \label{prop-Herbrand} The following equivalences hold:
\begin{align*}
	\herbrandInter{\,(n = m)}{\emptys}{} 
		& \;\; \Leftrightarrow \;
		n = m \\
	\herbrandInter{A \wedge B}{\pvec a^{\typeof{A}}, \pvec b^{\typeof{B}}}{} 
		& \;\; \Leftrightarrow \;
		(\herbrandInter{A}{\pvec a}{}) \wedge (\herbrandInter{B}{\pvec b}{}) \\
	\herbrandInter{A \to B}{\pvec f^{\typeof{A} \to \typeof{B}}}{} 
		& \;\; \Leftrightarrow \;
		\forall \pvec a^{\typeof{A}} ((\herbrandInter{A}{\pvec a}{}) \to (\herbrandInter{B}{\pvec f(\pvec a)}{})) \\
	\herbrandInter{\pexists{n}{\NN} A(n)}{\pvec a^{\typeof{A}}}{} 
		& \;\; \Leftrightarrow \;
		\exists n (\herbrandInter{A(n)}{\pvec a}{}) \\
	\herbrandInter{\pforall{n}{\NN} A(n)}{\pvec a^{\typeof{A}}}{} 
		& \;\; \Leftrightarrow \;
		\forall n (\herbrandInter{A(n)}{\pvec a}{}) \\
	\herbrandInter{\pexists{n}{\std} A(n)}{S^{\NN^*}, \pvec a^{\typeof{A}}}{} 
		& \;\; \Leftrightarrow \;
		\exists n \in S (\herbrandInter{A(n)}{\pvec a}{}) \\
	\herbrandInter{\pforall{n}{\std} A(n)}{\pvec f^{\NN^* \to \typeof{A}}}{} 
		& \;\; \Leftrightarrow \;
		\forall S \forall n \in S (\herbrandInter{A(n)}{\pvec f(S)}{}).
\end{align*}
\end{proposition}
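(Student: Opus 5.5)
The plan is to unfold Definition~\ref{def-uniform-realizability} with the base interpretation (\ref{base-herbrand}), one clause at a time, as in the proofs of Propositions~\ref{prop-kleene} and~\ref{prop-kreisel}. Equality is interpreted uniformly, so $\herbrandInter{(n=m)}{\emptys}{}$ is by definition $n = m$. The clause for conjunction is verbatim the general one. For implication we use that $\T^*$ consists of total functionals, so the conjunct $\pvec f(\pvec a){\downarrow}$ is trivially true and can be dropped, exactly as in the Remark on total realizers. The typing annotations $\pvec a^{\typeof{A}}$ record the fact that a realizer of $A$ has type $\typeof{A}$. This is checked by the same easy induction as for modified realizability, with the type assignment extended by $\typeof{\NN(n)} := \emptys$ and $\typeof{\std(n)} := \NN^*$.

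For the qualified quantifiers we apply Remark~\ref{rem-bounded-quantifers} to the two relativizing predicates.
\begin{itemize}
\item For $\NN$, the base interpretation is $\bounded{\NN}{n}{\emptys}{} \equiv {\sf true}$, so $\NN$ is uniform in the sense of Remark~\ref{rem-uniform}. Hence $\herbrandInter{\exists n(\NN(n) \wedge A(n))}{\emptys, \pvec a}{}$ reduces to $\exists n({\sf true} \wedge \herbrandInter{A(n)}{\pvec a}{})$, which is $\exists n(\herbrandInter{A(n)}{\pvec a}{})$.
\item Dually, $\herbrandInter{\forall n(\NN(n)\to A(n))}{\pvec f}{}$ with $\pvec f$ of type $\emptys \to \typeof{A}$ unfolds to $\forall n({\sf true} \to \herbrandInter{A(n)}{\pvec f}{})$. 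Here we identify a function with zero arguments with its value, so $\pvec f(\emptys)=\pvec f$ and the type is $\typeof{A}$. This gives $\forall n(\herbrandInter{A(n)}{\pvec a}{})$.
\item For $\std$, Remark~\ref{rem-bounded-quantifers} with $\bounded{\std}{n}{S}{} \equiv n\in S$ yields $\exists n(n\in S \wedge \herbrandInter{A(n)}{\pvec a}{})$, which is $\exists n\in S(\herbrandInter{A(n)}{\pvec a}{})$.
\item Similarly, the universal case yields $\forall S\,\forall n(n\in S \to \herbrandInter{A(n)}{\pvec f(S)}{})$, which is $\forall S\,\forall n\in S(\herbrandInter{A(n)}{\pvec f(S)}{})$, with $S$ of type $\NN^*$.
\end{itemize}

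None of these steps is deep. The only point I expect to need care is the bookkeeping of empty tuples in the $\NN$-clauses. There we must check that the realizer tuple $\emptys,\pvec a$ is literally $\pvec a$, and that the function type $\emptys\to\typeof{A}$ collapses to $\typeof{A}$ under the convention $\pvec\sigma\to\pvec\rho$ with $\pvec\sigma$ empty. Once this convention is made explicit, all equivalences are in fact syntactic identities up to dropping the trivial conjuncts ${\sf true}$ and the trivially true definedness statements.
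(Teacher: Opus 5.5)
Your proposal is correct and is exactly the intended argument: the paper states this proposition without proof, and the analogous Propositions~\ref{prop-kleene} and~\ref{prop-kreisel} are likewise treated as immediate unfoldings via Remark~\ref{rem-bounded-quantifers}, using totality of the realizers to drop the definedness conjuncts. Your explicit treatment of the empty tuple in the $\NN$-clauses, and of the type assignment $\typeof{\NN(n)} := \emptys$, $\typeof{\std(n)} := \NN^*$, supplies bookkeeping that the paper leaves implicit.
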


Note that the interpretation above differs from the original Herbrand interpretation \cite{Berg(2012)} in two ways. Firstly, our target theory does not have the standard predicate $\std(x)$ as it is eliminated when interpreted as $n \in S$. Consequently, we do not require the quantifications in the interpretation of $A \to B$ and $\pforall{n}{\std} A(n)$ to be over standard objects, as done in \cite{Berg(2012)}.

Secondly, the universal quantification $\pforall{n}{\std} A(n)$ in the Herbrand realizability as defined in~\cite{Berg(2012)} is not interpreted as above, but rather as
\begin{align*}
	\herbrandInter{\pforall{n}{\std} A(n)}{\pvec g^{\NN \to \typeof{A}}}{} 
		& \;\; \Leftrightarrow \;
		\forall n (\herbrandInter{A(n)}{\pvec g (n)}{}).
\end{align*}
However, one can easily go from $\pvec f^{\NN^* \to \typeof{A}}$ satisfying 
\begin{equation}
	\forall S \forall n \in S (\herbrandInter{A(n)}{\pvec f(S)}{})
\end{equation}
to a $\pvec g^{\NN \to \typeof{A}}$ satisfying
\begin{equation}
	\forall n (\herbrandInter{A(n)}{\pvec g (n)}{})
\end{equation}
and vice-versa, namely $\pvec g (n) := \pvec f (\{ n \})$ and $\pvec f (S) := \bigcup_{n \in S} \pvec g (n)$, using the monotonicity property of the Herbrand realizability (see \cite{OX(2020)} for details).

\begin{remark}[Disjunction]
\label{rem-disjunction}
Defining
\begin{equation}
\label{eq-disjunction}
A \vee B\quad :\equiv \quad\exists n\,((n = 0 \to A) \wedge (n \neq 0 \to B))
\end{equation}
one has
\begin{equation}
\label{eq-disjunction-realizer}
\herbrandInter{A \vee B}{\pvec a,\pvec b}{}\quad \Leftrightarrow \quad \herbrandInter{A}{\pvec a}{} \vee \herbrandInter{B}{\pvec b}{}
\end{equation}
which agrees with the nonconstructive interpretation of disjunction in~\cite{Berg(2012)}.
\end{remark}







\subsection{Classical realizability}
\label{sec-classical}

In all the previous examples, the interpretation of falsity ($\bot$) is such that negated formulas $\neg A$, which are abbreviations for $A \to \bot$, do not require any realizer. We can extract computational content from negated formulas, however, by giving some computational meaning to $\bot$, which can be seen as a combination of modified realizability and Friedman and Dragalin $A$-translation \cite{Dragalin(80), Friedman(78)}. For instance, consider the following base interpretation of the atomic formulas of $\HA$, where $P(a^\tau)$ is a new (unary) predicate symbol:
\begin{equation} \label{base-berger} \begin{array}{rl}
	\bounded{\,\perp}{\emptys\,}{a^\tau}{} & \;\; :\equiv \;\; P(a) \\[2mm]
	\bounded{\NN}{n}{m^\NN}{} & \;\; :\equiv \;\; n = m \\[2mm]
	\bounded{\;=}{(n, m)}{a^\tau}{} & \;\; :\equiv \;\; (n = m) \vee P(a).
\end{array}
\end{equation}
If the (classical) realizability is only being applied after a double negation translation, we are in fact in minimal logic (and we no longer need to deal with ex-falso-quodlibet $\bot \to A$), the atomic formulas $n = m$ can be given a simpler interpretation
\begin{equation} 
\begin{array}{rl}
	\bounded{\;=}{(n, m)}{\emptys}{} & \;\; :\equiv \;\; n = m.
\end{array}
\end{equation}

Let us write ``$\bergerInter{A}{\pvec a}$'' for the realizability interpretation ``$\iInter{A}{\pvec a}{}$'' (Def. \ref{def-uniform-realizability}) obtained from the primitive recursive functionals with the base interpretation (\ref{base-berger}). 

\begin{definition} To each formula $A$ of $\HA$ let us associate a tuple of types $\typeof{A}$ inductively as follows: For the atomic formulas we have
\[
	\typeof{\perp} \; := \; \tau
	\qquad
	\typeof{n = m} \; := \; \tau
	\qquad
	\typeof{\NN(n)} \; := \; \NN
\]
and inductively
\begin{align*}
	\typeof{A \wedge B} 
		& \; := \; \typeof{A}, \typeof{B} &
	\typeof{A \to B} 
		& \; := \;
		\typeof{A} \to \typeof{B} \\
	\typeof{\exists n A} 
		& \; := \;
		\typeof{A} &
	\typeof{\forall n A} 
		& \; := \;
		\typeof{A}.
\end{align*}
\end{definition}

\begin{proposition}[Classical modified realizability \cite{BO(02B),BO(02A)}] \label{prop-berger} The following equivalences hold:
\begin{align*}
	\bergerInter{\,\bot}{a} 
		& \;\; \Leftrightarrow \;
		P(a) \\
	\bergerInter{\,(n = m)}{a} 
		& \;\; \Leftrightarrow \;
		(n = m) \vee P(a) \\
	\bergerInter{A \wedge B}{\pvec a^{\typeof{A}}, \pvec b^{\typeof{B}}}
		& \;\; \Leftrightarrow \;
		(\bergerInter{A}{\pvec a}) \wedge (\bergerInter{B}{\pvec b}) \\
	\bergerInter{A \to B}{\pvec f^{\typeof{A} \to \typeof{B}}} 
		& \;\; \Leftrightarrow \;
		\forall \pvec a^{\typeof{A}} ((\bergerInter{A}{\pvec a}) \to (\bergerInter{B}{\pvec f(\pvec a)})) \\
	\bergerInter{\pexists{n}{\NN} A(n)}{m^\NN, \pvec a^{\typeof{A}}} 
		& \;\; \Leftrightarrow \;
		\bergerInter{A(m)}{\pvec a} \\
	\bergerInter{\pforall{n}{\NN} A(n)}{\pvec f^{\NN \to \typeof{A}}} 
		& \;\; \Leftrightarrow \;
		\pforall{n}{\NN} (\bergerInter{A(n)}{\pvec f(n)}).
\end{align*}
\end{proposition}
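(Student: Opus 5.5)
The plan is to unfold Definition \ref{def-uniform-realizability} for the base interpretation (\ref{base-berger}), one clause of the proposition at a time. The argument runs parallel to the proofs of Propositions \ref{prop-kleene} and \ref{prop-kreisel}.

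The first two equivalences, for $\bot$ and $n = m$, hold by definition. The atomic clause of the uniform interpretation gives $\bergerInter{\bot}{a} \equiv \bounded{\bot}{\emptys}{a}{}$ and $\bergerInter{(n=m)}{a} \equiv \bounded{=}{(n,m)}{a}{}$. These are $P(a)$ and $(n = m) \vee P(a)$ respectively.

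The clauses for conjunction and implication are the corresponding clauses of Definition \ref{def-uniform-realizability}. For implication the definedness conjunct $\pvec f(\pvec a){\downarrow}$ can be dropped, because realizers are terms of system $\T$. These are total, so we may use the simplified clause from the remark on total realizers. I also need to check that the realizers have the types $\typeof{A}$ and $\typeof{A}\to\typeof{B}$ from the definition preceding the proposition. This follows by a routine induction on $A$, exactly as it does for modified realizability.

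For the qualified quantifiers I will invoke Remark \ref{rem-bounded-quantifers}. In (\ref{base-berger}), $\bounded{\NN}{n}{m}{}$ is the equality $n = m$ of the target theory. So the special case of the remark applies and yields
\[
\bergerInter{\pexists{n}{\NN} A(n)}{m,\pvec a} \Leftrightarrow \bergerInter{A(m)}{\pvec a},
\qquad
\bergerInter{\pforall{n}{\NN} A(n)}{\pvec f} \Leftrightarrow \forall m\, (\bergerInter{A(m)}{\pvec f(m)}).
\]
This uses the substitution property (\ref{eq-substitution}) to replace $n$ by $m$ inside the realizability formula. Since $m$ ranges over type $\NN$ in the target theory, the right-hand side of the second equivalence is the stated $\pforall{n}{\NN}(\bergerInter{A(n)}{\pvec f(n)})$.

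The main point to handle with care is the following. The equality $n = m$ used in $\bounded{\NN}{}{}{}$ is genuine target-theory equality, not the source-level atomic formula. Its source-level counterpart carries the extra disjunct $P(a)$, so the two must not be confused. Because we only need the target-theory equality, substitution is legitimate, and the remaining clauses are immediate.
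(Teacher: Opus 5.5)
Your proposal is correct and follows the route the paper takes. The paper gives no separate proof of Proposition~\ref{prop-berger}, but, as for Propositions~\ref{prop-kleene} and~\ref{prop-kreisel}, the equivalences are direct unfoldings of Definition~\ref{def-uniform-realizability} for the base interpretation (\ref{base-berger}): the qualified quantifiers come from the equality special case of Remark~\ref{rem-bounded-quantifers}, and the definedness conjuncts are dropped because $\T$-terms are total. Your observation that $\bounded{\NN}{n}{m}{}$ is target-level equality, not the source-level $=$ with its extra disjunct $P(a)$, is exactly what licenses that special case.
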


The main motivation behind giving falsity ($\bot$) computational content is that the negated formula $\neg \neg \exists a^\tau P(a)$ now requires a realizer $\phi$, namely,
\begin{equation}
    \bergerInter{\neg \neg \exists a^\tau P(a)}{\phi} \; \equiv \; 
    \forall f^{\tau \to \tau} (\forall a (P(a) \to P(f(a))) \to P(\phi(f))).
\end{equation}
Therefore, if $\bergerInter{\neg \neg \exists a^\tau P(a)}{\phi}$ then $P(\phi(\lambda a . a))$.

\subsection{Aschieri-Berardi Learning Realizability}
\label{sec-learning}

Assume now that G\"odel's system $\T$ is extended with a new base type ${\mathbf S}$ of ``states''. Let us call the extension $\T[{\mathbf S}]$. For the learning realizability at state $s$, we take the terms of $\T[{\mathbf S}]$ as realizers, and, for some fixed state $s \in {\mathbf S}$, the following base interpretation of the atomic formulas:
\begin{equation} \label{base-berardi}
\begin{array}{rl}
	\bounded{\perp}{\emptys}{\gamma}{s} & \;\; :\equiv \;\; \gamma(s) \neq s \\[2mm]
	\bounded{\NN}{n}{\alpha}{s} & \;\; :\equiv \;\; \alpha(s) = n \\[2mm]
	\bounded{=}{(n, m)}{\gamma}{s} & \;\; :\equiv \;\; (\gamma(s) = s) \to (n = m),
\end{array}
\end{equation}
where $\alpha \colon {\mathbf S} \to \NN$ and $\gamma \colon {\mathbf S} \to {\mathbf S}$. So, falsity ($\perp$) is realized by the state transformer $\gamma$ if $s$ is not a fixed-point of $\gamma$. In other words, it is perfectly fine to reach a contraction if $\gamma$ is learning something from that ($\gamma(s)$ is the improved state of knowledge). Similarly, $n = m$ might be false, as long as $s$ is not a fixed-point for $\gamma$.

Intuitively, the state $s$ will keep track of triples $\langle P, \vec m, n \rangle$, where $n$ is a witness to $\exists n P(\vec m, n)$. We will start with the empty state (nothing is known), and the state transformer $\gamma$, extracted from the proof, updates a given state with new information or returns the same state (no new information is needed), in which case a fixed point is reached. This allows the realizability to interpret any instance of the law-of-excluded middle for $\Sigma_1^0$, i.e. formulas of the kind $\exists n P(\vec m, n)$. 

Let us write ``$\kripkeInter{A}{\pvec a}{s}$'' for the instance of the uniform realizability interpretation ``$\iInter{A}{\pvec a}{}$'' (Def. \ref{def-uniform-realizability}) obtained from $\T[{\mathbf S}]$ and base interpretation (\ref{base-berardi}), for some fixed state $s \in {\mathbf S}$.

\begin{definition} To each formula $A$ of $\HA$ let us associate a tuple of types $\typeof{A}$ inductively as follows: For the atomic formulas we have
\[
	\typeof{\perp} \; := \; {\mathbf S} \to {\mathbf S} 
	\qquad
	\typeof{n = m} \; := \; {\mathbf S} \to {\mathbf S} 
	\qquad
	\typeof{\NN(n)} \; := \; {\mathbf S} \to \NN
\]
and inductively
\begin{align*}
	\typeof{A \wedge B} 
		& \; := \; \typeof{A}, \typeof{B} &
	\typeof{A \to B} 
		& \; := \;
		\typeof{A} \to \typeof{B} \\
	\typeof{\exists n A} 
		& \; := \;
		\typeof{A} &
	\typeof{\forall n A} 
		& \; := \;
		\typeof{A}.
\end{align*}
\end{definition}

\begin{proposition}[Aschieri-Berardi learning realizability] \label{prop-berardi} The following equivalences hold:
\begin{align*}
	\kripkeInter{\,(n = m)}{\gamma^{{\mathbf S} \to {\mathbf S}}}{s} 
		& \;\; \Leftrightarrow \;
		(\gamma(s) = s) \to (n = m) \\[1mm]
	\kripkeInter{A \wedge B}{\pvec a^{\typeof{A}}, \pvec b^{\typeof{B}}}{s} 
		& \;\; \Leftrightarrow \;
		(\kripkeInter{A}{\pvec a}{s}) \wedge (\kripkeInter{B}{\pvec b}{s}) \\[1mm]
	\kripkeInter{A \to B}{\pvec f^{\typeof{A} \to \typeof{B}}}{s} 
		& \;\; \Leftrightarrow \;
		\forall \pvec a^{\typeof{A}} ((\kripkeInter{A}{\pvec a}{s}) \to (\kripkeInter{B}{\pvec f(\pvec a)}{s})) \\[1mm]
	\kripkeInter{\pexists{n}{\NN} A(n)}{\alpha^{{\mathbf S} \to \NN}, \pvec a^{\typeof{A}}}{s} 
		& \;\; \Leftrightarrow \;
		\kripkeInter{A(\alpha(s))}{\pvec a}{} \\[1mm]
	\kripkeInter{\pforall{n}{\NN} A(n)}{\pvec f^{({\mathbf S} \to \NN) \to \typeof{A}}}{s} 
		& \;\; \Leftrightarrow \;
		\forall \alpha^{{\mathbf S} \to \NN} \pforall{n}{\NN} (\alpha(s) = n \to \kripkeInter{A(n)}{\pvec f (\alpha)}{s}).
\end{align*}
\end{proposition}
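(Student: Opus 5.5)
The plan is to unfold Definition~\ref{def-uniform-realizability} with the base interpretation (\ref{base-berardi}) at the fixed state $s$. Each clause of the proposition then follows either immediately or via Remark~\ref{rem-bounded-quantifers}, exactly as in the proofs of Propositions~\ref{prop-kleene} and~\ref{prop-kreisel}.

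The first three clauses are direct. The equality clause is the atomic case of Definition~\ref{def-uniform-realizability}: $\kripkeInter{(n=m)}{\gamma}{s}$ is by definition $\bounded{=}{(n,m)}{\gamma}{s}$, which is $(\gamma(s)=s)\to(n=m)$. The conjunction clause is literally the defining clause. For implication, realizers are terms of $\T[{\mathbf S}]$, so they are total, and the remark on total realizers lets us drop the definedness conjunct $\pvec f(\pvec a){\downarrow}$. The typing annotations come from the definition of $\typeof{A}$: if $\pvec a$ realizes $A$, then $\pvec a$ has type $\typeof{A}$, as in the modified realizability case.

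For the qualified quantifiers I would instantiate Remark~\ref{rem-bounded-quantifers} with $P=\NN$ and $\bounded{\NN}{n}{\alpha}{s}\equiv \alpha(s)=n$.

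For $\pforall{n}{\NN}A(n)$, the remark gives
\[
\forall \alpha\,\forall n\,(\alpha(s)=n \to \kripkeInter{A(n)}{\pvec f(\alpha)}{s}),
\]
again dropping definedness by totality. This is the stated form, reading the displayed $\pforall{n}{\NN}$ on the right-hand side as the target-level quantifier over numbers; that quantifier is harmless since $\alpha(s)$ is a number.

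For $\pexists{n}{\NN}A(n)$, the remark gives
\[
\exists n\,(\alpha(s)=n \wedge \kripkeInter{A(n)}{\pvec a}{s}).
\]
Substitution (\ref{eq-substitution}) then collapses this to $\kripkeInter{A(\alpha(s))}{\pvec a}{s}$. The side condition of (\ref{eq-substitution}) holds because the bound $n$ does not occur in the realizer tuple $\pvec a$.

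The main obstacle is this substitution step: it needs $\alpha(s)$ to be a legitimate term of the number sort in the target theory. It is, since $\alpha\colon{\mathbf S}\to\NN$, and the direction from $\exists n(\alpha(s)=n\wedge\ldots)$ to $A(\alpha(s))$ then uses only the equality rules of the target theory $\TargetT$, which is assumed to be closed under intuitionistic logic. I would also note that the state index $s$ is fixed throughout, which is what makes the missing subscript $s$ on the right-hand side of the existential clause harmless.
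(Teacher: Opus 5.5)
Your proposal is correct and follows essentially the same route as the paper: unfold Definition~\ref{def-uniform-realizability} with the base interpretation (\ref{base-berardi}) and drop definedness because $\T[{\mathbf S}]$ is total. The paper writes out only the $\pforall{n}{\NN}$ clause (your appeal to Remark~\ref{rem-bounded-quantifers} is the same unfolding); its aside on realizing transitivity via the state union concerns soundness, not these equivalences, and your handling of the $\pexists{n}{\NN}$ clause via target-level equality is more careful than the paper's ``straightforward''.
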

{\bf Proof}. Let us look at some non-trivial cases. For instance, the axiom of transitivity for equality requires a term $t[\gamma_1, \gamma_2]$ such that
\[
    (\gamma_1(s) = s \to n = i) \wedge (\gamma_2(s) = s \to i = m) \to (t[\gamma_1, \gamma_2](s) = s \to n = m).
\]
Berardi-Aschieri have shown \cite{Aschieri(2010)} that a form of `union' of states $\Cup$ exists so that $t[\gamma_1, \gamma_2](s) := \gamma_1(s) \Cup \gamma_2(s)$ satisfies 
\[
t[\gamma_1, \gamma_2](s) = s \to \gamma_1(s) = s \wedge \gamma_2(s) = s. \] 
Let us look at the case of $\pforall{n}{\NN} A(n)$.
\begin{align*}
	\kripkeInter{\pforall{n}{\NN} A(n)}{\pvec f^{\NN \to \typeof{A}}}{s} 
		& \;\; \equiv \;
			\iInter{\forall n (\NN(n) \to A(n)}{\pvec f}{}) \\
		& \; \stackrel{\textup{D}\ref{def-uniform-realizability}}{\equiv} 
			\forall n (\forall \alpha ((\bounded{\NN}{n}{\alpha}{s}) \to 
				(\pvec f(\alpha) \downarrow \wedge \; \iInter{A(n)}{\pvec f(\alpha)}{}))) \\
		& \;\; \stackrel{(\ref{base-kleene})}{\equiv} \;
			\forall n (\forall \alpha (\NN(n) \wedge (\alpha(s) = n) \to 
				(\iInter{A(n)}{\pvec f(\alpha)}{}))) \\[1mm]
		& \;\;\, \Leftrightarrow \; \;
			\forall \alpha^{{\mathbf S} \to \NN} \pforall{n}{\NN} ((\alpha(s) = n) \to  \kripkeInter{A(n)}{\pvec f(\alpha)}{s}).
\end{align*}
The other cases are straightforward. \hfill $\Box$

\begin{remark} The universal quantification $\pforall{n}{\NN} A(n)$ in the learning realizability is not interpreted as above, but rather as
\begin{align*}
	\kripkeInter{\pforall{n}{\NN} A(n)}{\pvec g^{\NN \to \typeof{A}}}{s} 
		& \;\; \Leftrightarrow \;
		\pforall{n}{\NN} (\kripkeInter{A(n)}{\pvec g (n)}{s}).
\end{align*}
It is easy to see, however, that one can easily go from $\pvec f^{({\mathbf S} \to \NN) \to \typeof{A}}$ satisfying 
\begin{equation}
	\forall \alpha^{{\mathbf S} \to \NN} \pforall{n}{\NN} ((\alpha(s) = n) \to  \kripkeInter{A(n)}{\pvec f(\alpha)}{s})
\end{equation}
to a $\pvec g^{\NN \to \typeof{A}}$ satisfying
\begin{equation}
	\pforall{n}{\NN} (\kripkeInter{A(n)}{\pvec g (n)}{s})
\end{equation}
and vice-versa, namely $\pvec g (n) := \pvec f (\lambda s . n)$ and $\pvec f (\alpha) := \pvec g (\alpha (s))$.
\end{remark}

\section{Conclusion}

We introduced \emph{uniform realizability} as a family of realizability interpretations where computational content is generated exclusively through the interpretation of the basic predicates of a theory while quantifiers are interpreted uniformly. 

A benefit of the uniform presentation is that one can establish a general form of Soundness  (Theorem~\ref{thm-soundness}), so that it is enough to show the realizability of non-logical axioms concerned with the primitive predicates to obtain soundness for a particular instance. 
Another benefit is that it is possible to add arbitrary classically valid formulas as axioms as long as they only contain primitive predicates that are interpreted uniformly since such formulas are interpreted by themselves (Remark~\ref{rem-uniform}).

We discussed five examples of concrete realizability interpretations that can be modelled in that way: Kleene's number realizability, Kreisel's modified realizability, a version of the Herbrand realizability of nonstandard arithmetic, ``classical'' realizability based on the Friedman-Dragalin translation, and the stateful ``learning'' realizability by Aschieri and Berardi. To keep things simple, we considered these interpretations only for first-order source theories $\SourceT$, but most of them could easily be extended to all finite types.

The idea of uniform quantifiers also appears in realizability for second-order Heyting Arithmetic \cite{Troelstra(73)} and in Krivine's classical realizability~\cite{Krivine(2009)}. Krivine's interpretation does not quite fit in our framework though, since it treats the logical constants differently. The idea of uniformity is taken even further in Schwichtenberg's approach~\cite{SchwichtenbergWainer12} where also uniform versions of the propositional connectives are considered. On the type-theoretic side, the idea of uniformity is embodied through intersection types~\cite{BarendregtCoppoDezani83}.

\paragraph{Further work.} 

Some instances of uniform realizability we have discussed, such as Herbrand realizability and ``learning'' realizability, do not exactly coincide with their original formulations. As a result, their soundness must be carefully verified. Another important concern is conservativity: expressing an interpretation as an instance of uniform realizability typically requires a richer language, which includes both qualified and unqualified quantifiers. As noted above, this expanded language allows for the addition of new axioms that are either themselves realizable or classically valid and equivalent to their realizability interpretations, but which cannot be formulated within the original theory. Examples include double negation elimination for uniform formulas and a version of the independence-of-premise schema~(\ref{eq-ipn}) in which qualified existential quantifiers are replaced with unqualified ones. This raises the question of whether the extended theory remains conservative over the original one.

\bibliographystyle{eptcs}

\bibliography{dblogic}

\end{document}